\newtheorem{theorem}{Theorem}[section]
\newtheorem{lemma}[theorem]{Lemma}
\newenvironment{proof}[1][Proof]{\begin{trivlist}
\item[\hskip \labelsep {\bfseries #1}]}{\end{trivlist}}
\begin{document}

  \title{Analytic solution of a model of language competition with bilingualism and 
          interlinguistic similarity}

  \author[1]{M. V. Otero-Espinar$^*$}
  \author[2,3]{L. F. Seoane$^*$}
  \author[1,4]{J. J. Nieto}
  \author[5]{J. Mira$^\dagger$}

  \affil[1]{Departamento de An\'alise Matem\'atica and Instituto de Matem\'aticas, 
                  Universidade de Santiago de Compostela, 15782 Santiago de Compostela, Spain}

  \affil[2]{ICREA-Complex Systems Lab, Universitat Pompeu Fabra, Dr Aiguader 88, 08003 
                  Barcelona, Spain.}

  \affil[3]{Institut de Biologia Evolutiva, UPF-CSIC, Psg Barceloneta 37, 08003 Barcelona, 
        Spain.}

  \affil[4]{Department of Mathematics, Faculty of Science, King Abdulaziz University, Jeddah, 
                Saudi Arabia. }

  \affil[5]{Departamento de FÃ\'isica Aplicada, Universidade de Santiago de Compostela, 15782 
          Santiago de Compostela, Spain. 

          $^*$ These authors contributed equally to this work.

          $^\dagger$ Corresponding author: jorge.mira@usc.es}

  \date{\today}

  \maketitle
  
  \begin{abstract}
 
    An in-depth analytic study of a model of language dynamics is presented: a model which tackles
the problem of the coexistence of two languages within a closed community of speakers taking into
account bilingualism and incorporating a parameter to measure the distance between languages. After
previous numerical simulations, the model yielded that coexistence might lead to survival of both
languages within monolingual speakers along with a bilingual community or to extinction of the
weakest tongue depending on different parameters. In this paper, such study is closed with thorough
analytical calculations to settle the results in a robust way and previous results are refined with
some modifications. From the present analysis it is possible to almost completely assay the number
and nature of the equilibrium points of the model, which depend on its parameters, as well as to
build a phase space based on them. Also, we obtain conclusions on the way the languages evolve with
time. Our rigorous considerations also suggest ways to further improve the model and facilitate the
comparison of its consequences with those from other approaches or with real data.

  \end{abstract}

  \section{Introduction} 
    \label{sec:intro}

    At whatever scale that we look, languages reveal themselves as very elaborated entities
consisting of many coupled parts: grammar, vocabulary, etc; each of them complex in its own nature
as well. They are, moreover, a main instrument of interaction in an entangled web of social agents
so that the state and evolution of tongues cannot, ultimately, be considered as detached from other
social dynamics. We readily appreciate that we are in front of an utter challenge to the human
intellect \cite{MaynardSmith-Szathmary, Wray}. Small steps are gradually taken towards a further
understanding of the many problems posed by languages. Leaving aside those contributions from the
more classic fields (e.g. philology), linguistic questions were opened to very diverse branches of
science during the 20th century by drawing inspiration from some pioneer multi- disciplinary works
\cite{Hawkings-GellMann}. Given the complexity outlined before, any of these first transversal
approaches are necessarily simplistic or rely largely on computer simulations, and rigorous and
definitive mathematical proofs of the results are often missing. 

    The kind of questions that were exposed to a more varied community of researchers regard the
evolution of languages: transformations in their syntaxes, grammars, or vocabularies; aging, rise,
and death; the dynamics of their number of speakers: spreading of culture, competition or other kind
of interaction with other tongues; etc. And the fields that take on these issues are as diverse as
sociology, biology, or physics. The science of complex systems should be highlighted because of its
very clever usage of existing mathematical methods that stem mainly from statistical mechanics \cite
{Loreto-Steels}. While also appraising other important contributions that widen our knowledge about
the nature of human languages \cite{Axelrod, Dyen-Black, Petroni-Serva, Schulze-Wichmann,
CorominasMurtra-Sole, Sole-Steels, NelsonSathi-Dagan, Amancio-Costa}, we shall focus on a seminal
paper by Abrams and Strogatz \cite {Abrams-Strogatz} that prominently triggered research in its
direction. For an exhaustive review on very varied related topics with up-to-date bibliography
consult \cite{Sole-Fortuny}; and for a more extended review on the impact of statistical physics on
social dynamics, including language modeling, see \cite{Castellano-Loreto}.

    The line of research propelled by \cite{Abrams-Strogatz} addresses the modeling of language
coexistence as a competitive dynamics to attract speakers. In \cite{Abrams-Strogatz} a minimal
model was accounted for and in accordance with experimental data a sounded result spread: that a
two-languages competition for speakers always led to the extinction of one of the parties. Further
analysis of the model \cite{Stauffer-SanMiguel, Chapel-SanMiguel} shows that it also allows for
stable language coexistence, but the parametric setup needed has not been observed in any study
with available real data. Following the trend, more complicated models were developed that took
spatial or social structure into account \cite{Patriarca-Leppanen, Patriarca-Heinsalu, 
Vazquez-SanMiguel} or that explicitly introduced bilingualism \cite{Vazquez-SanMiguel, Wang-Minett,
Castello-SanMiguel, Mira-Paredes, Minett-Wang, Mira-Nieto, Patriarca-SanMiguel}. This naturally
eased the way to solutions with stable coexisting languages.

    As it was advanced before, computer simulations constitute a favorite tool in this modern wave
of scientific approaches to the study of languages. The results are usually convincing more than
enough and, besides, these numerical studies allow to reach a depth of knowledge that might be
impossible if we should rely only on very rigorous analytical demonstrations. Despite of this, many
of the most insightful contributions to the comprehension of human communication follow from
meticulous and carefully proven mathematical constructions, mainly in the study of grammars and
largely aided by methods from computational sciences \cite{Nowak-Niyogi}.

    In this paper we intend to make a contribution by analytically elucidating some existing results
in the modeling of language competition. This is necessarily a rearguard job--since computer
simulations have the lead by far--but we will see how it is very valuable and necessary. Thanks to
the thorough reasoning of this paper we gain a deep understanding about the dynamics of speakers of
coexisting languages. We focus on a model of language competition that allows bilingualism
introduced in \cite{Mira-Paredes} and whose most interesting results were numerically derived in
\cite{Mira-Nieto}. By analyzing the model we will come to a better interpretation of the previous
numerical work and we will reach some new results that are, now, supported by robust analytical
proofs. \\

    The paper is structured in the following way: In section \ref{sec:derivation} the hypothesis
used in \cite{Mira-Paredes} are formulated and the corresponding equations are derived therefrom.
In section \ref{sec:resolution} strict mathematical results are carefully obtained. Whenever the
analytic tools do not reach to fully solve the problem, numerical simulations are employed, but its
presence--usually at the very end of the chain of reasoning--is always warned to the reader to leave
any minimally uncertain result open to debate. In section \ref{sec:discussion} the more mathematical
aspects of the reached solutions are left aside and the results are analyzed primarily from the
point of view of language dynamics: what does the analytical outcome mean in terms of coexisting
languages?

  \section{Derivation of the model equations} 
    \label{sec:derivation}
  
    Closely following the path pointed out by Abrams and Strogatz \cite{Abrams-Strogatz}, we work on
a model of two competing languages where bilingualism is an option in between and where the
similarity between the tongues plays and explicit role \cite{Mira-Paredes}. The model considers a
human population whose individuals might talk either of two languages $X$ or $Y$ or both of them.
Along the text we might refer to either of the monolingual communities or the bilingual one as
\emph{groups}, standing for \emph{groups of speakers}. Naming $x$ and $y$ the fraction of
monolingual speakers of each tongue, and naming $b$ the fraction of bilingual speakers; two non-
linear coupled differential equations are derived from the following basic hypothesis:
      \begin{enumerate}
        \item \emph{Population size remains constant.} 

        \item \emph{The probability that an individual acquires a language different from its
current one grows with the} status \emph{of the new language.} Therefore a status parameter
$s\in[0,1]$ of one of the languages is introduced (being $1-s$ the status of the other one). These
statuses are constant and a property of the system of coexisting languages.

        \item \emph{It is possible to define a distance between two languages}. This was done in
\cite{Mira-Paredes} introducing a parameter called interlinguistic similarity, $k\in[0,1]$: $k=0$
for \emph{orthogonal} languages, $k=1$ for exactly equal languages--a measure of how close the
languages are to each other. This allowed to get this distance in an easy and straight way, by
simply fitting percentages of speakers of the involved languages along time to a set of differential
equations. This interlinguistic similarity is constant and, again, a property of each pair of
languages. This parameter describes how difficult it is for a monolingual speaker to learn the other
language: this task should be easier if the languages are more similar to each other. Another view
of it is that the probability that an individual retains its old language when learning a new one
grows with this parameter $k$. Thus, the parameter $k$ is the gate which opens the way to the birth
of a bilingual group.
 
        \item \emph{The probability that an individual acquires a language different from its
current one grows with the fraction of speakers of the new language.} An exponent $a$ is introduced
to ponder the importance of the fraction of speakers in a group in attracting new speakers with
respect to $s$ and $k$. Once more, $a$ should be a constant that characterizes each system of two
competing languages. Existing field work with similar equations \cite{Abrams-Strogatz} shows that
the equivalent parameter in that model varies little across pairs of coexisting languages,
suggesting that social pressure to shift languages might be a constant through cultures.
      \end{enumerate}

    Considering different hypothesis or variations on the implementation of the current ones might
lead to different modeling of the same phenomenon \cite{Wang-Minett, Castello-SanMiguel, 
Minett-Wang}. In the present paper we focus on this minimal model whose results could be compared 
to data from a real system where the hypotheses are reasonably met \cite{Mira-Paredes, 
Mira-Nieto}. \\
   
    As stated previously, we seek to analytically solve the model introduced in \cite{Mira-Paredes}
and test the consistence of the results obtained in \cite{Mira-Nieto}. Therefore we will be working
on the set of differential equations that the authors derived for the dynamics of the fraction of
monolingual speakers of each language. The derivation is as follows: 

    We formulate the probability of shifting languages $P_{XY}$ and $P_{YX}$ and the probability of
arriving to (departing from) the bilingual group from each monolingual group $P_{XB}$, $P_{YB}$
($P_{BX}$, $P_{BY}$) based on the hypothesis listed above:
      \begin{eqnarray}
        P_{XB} &=& ck(1-s)(1-x)^a, \nonumber \\
        P_{YB} &=& cks(1-y)^a, \nonumber \\ 
        P_{BX} = P_{YX} &=& c(1-k)s(1-y)^a, \nonumber \\
        P_{BY} = P_{XY} &=& c(1-k)(1-s)(1-x)^a. 
        \label{eq:res.01}
      \end{eqnarray}
$c>0$ is a normalization constant \cite{Mira-Nieto}. Equations \ref{eq:res.01} are used to reckon
the rates at which the population of the three groups grow or decline:
      \begin{eqnarray}
        {dx\over dt} &=& F_x(x,y), \nonumber \\ 
        {dy\over dt} &=& F_y(x,y). 
        \label{eq:res.02}
      \end{eqnarray}
A third differential equation exists that tracks the evolution of the proportion $b$ of bilinguals,
but this will not be needed in the following thanks to the normalization of the population
$x+y+b=1$. In \cite{Mira-Nieto} these equations are written such that the contributions of all
terms from equations \ref{eq:res.01} can be explicitly read, but here we prefer a more compact
notation so that we can research the field $F=(F_x,F_y)$:
      \begin{eqnarray}
        F_x(x,y) &=& c\left[(1-x)(1-k)s(1-y)^a - x(1-s)(1-x)^a \right], \nonumber \\ 
        F_y(x,y) &=& c\left[(1-y)(1-k)(1-s)(1-x)^a - ys(1-y)^a \right]. 
        \label{eq:res.03}
      \end{eqnarray}
For the present study, the parameters are restricted to $k\in(0,1)$, $s\in(0,1)$, and $a>0$.
Sometimes $a$ will be further restricted. If this is the case, it will be noted.

    As said before, the parameter $c$ normalizes the dynamics and is irrelevant for equilibrium
points and stability issues, therefore it was not paid much attention in previous literature and
neither will it be paid attention now. The parameter $a$ generalizes the monotonously increasing
dependence of the probability of transition between languages as outlined in the fourth hypothesis.
This parameter has been found to be larger than $1$ and relatively constant among cultures
($a\sim1.31$) in experimental accounts of the problem of language dynamics \cite{Abrams-Strogatz,
Mira-Paredes, Mira-Nieto}, but we intend to address the behavior of the system for $a>0$, which is
a more interesting generalization. The other parameters that concern us are $k$ and $s$.

  \section{Resolution of the model} 
    \label{sec:resolution}
  
    \subsection{The model yields realistic trajectories}
      \label{sec:res.realTrajec}
   
      We note that all the possible distributions of speakers among the different groups can be
represented in the $x-y$ space. There, the condition $x+y+b=1$ defines a triangular set $A=\{(x,y),
\>\>x\ge0, y\ge0, x+y\le1\}$ upon which the fields $F_x(x,y)$ and $F_y(x,y)$ are acting. For the
sake of basic consistency of the model, its solutions must be feasible; meaning that a negative
number of individuals in any group should be forbidden: the dynamics must happen inside $A$ for
realistic systems.
        \begin{lemma}
          Assume the parameter $a>0$. The set $A=\{(x,y), \>\>x\ge0, y\ge0, x+y\le1\}$ is positive 
invariant. 
          \label{lem:01}
        \end{lemma}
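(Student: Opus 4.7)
The plan is to invoke a tangent-cone / Nagumo-style invariance argument: a closed convex set $A$ is positively invariant under a continuous vector field provided that on each boundary face the field points into $A$ or is tangent to it. The triangle $A$ has three sides, so I would check each separately.

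First I would note that $F$ is continuous on $A$ (for any $a>0$, the factors $(1-x)^a$ and $(1-y)^a$ are continuous on $[0,1]$), so at the very least Peano's theorem provides local solutions, and a boundary-inflow check suffices to keep them trapped. Then I would restrict $F$ to each edge:
\begin{itemize}
\item On the edge $\{x=0,\,0\le y\le 1\}$, direct substitution in (\ref{eq:res.03}) gives $F_x(0,y)=c(1-k)s(1-y)^a\ge 0$, so the horizontal component pushes into $A$.
\item On the edge $\{y=0,\,0\le x\le 1\}$, analogously $F_y(x,0)=c(1-k)(1-s)(1-x)^a\ge 0$.
\item On the hypotenuse $\{x+y=1,\,x,y\ge0\}$, I would show $F_x+F_y\le 0$, which bounds $\frac{d}{dt}(x+y)$ at the boundary and prevents $x+y$ from exceeding $1$ (equivalently, keeps $b\ge0$).
\end{itemize}

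The only nontrivial computation is the hypotenuse. Setting $y=1-x$ and collecting terms in (\ref{eq:res.03}),
\begin{equation*}
F_x+F_y \;=\; c\Bigl[(1-x)sx^{a}\bigl((1-k)-1\bigr) + x(1-x)^{a}(1-s)\bigl((1-k)-1\bigr)\Bigr] \;=\; -ck\bigl[(1-x)sx^{a}+x(1-x)^{a}(1-s)\bigr],
\end{equation*}
which is manifestly $\le 0$ because $c,k,s,1-s,x,1-x\ge 0$. This is the computation I expect to be the main (and essentially the only) obstacle: it requires noticing that the $(1-k)s$ and $(1-k)(1-s)$ contributions on the hypotenuse cancel with parts of the monolingual loss terms, leaving a clean factor of $-k$.

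Finally, I would wrap up by appealing to the boundary-inflow criterion for convex sets: at every boundary point the outward normal has nonpositive inner product with $F$, by the three checks above (the two corners $(0,1)$ and $(1,0)$ are covered by continuity at the intersection of two inward-pointing faces, and at $(0,0)$ both components are $\ge 0$). Hence no trajectory starting in $A$ can leave $A$ in forward time, which gives positive invariance. If one is worried about non-Lipschitz behaviour of $(1-x)^a$ and $(1-y)^a$ for $a<1$ near the corners $(1,0)$ and $(0,1)$, I would simply remark that Nagumo's theorem only requires continuity of $F$ together with existence of solutions, both of which hold here.
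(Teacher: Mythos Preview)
Your proof is correct and follows essentially the same approach as the paper: both check that the field points inward on each of the three boundary segments of $A$, with the hypotenuse condition being $F_x+F_y\le 0$ (equivalently, the paper's $-F_x-F_y\ge 0$). Your version is slightly more explicit---you carry out the hypotenuse computation in full (the paper calls it ``trivially satisfied''), invoke Nagumo's criterion by name, and address the corners and the non-Lipschitz case $a<1$---but the underlying argument is the same.
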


        \begin{proof}
          Let us see that the field defined in equations \ref{eq:res.03} is directed inwards in the
          boundaries of $A$.
            \begin{enumerate}
              \item \textbf{If $x=0$ and $y\in[0,1]$}: 
                \begin{eqnarray}
                  F_x(0,y) &=& c(1-k)s(1-y)^a \ge 0, \nonumber\\ 
                  F_y(0,y) &=& c\left[ (1-y)(1-k)(1-s) - ys(1-y)^a \right]. 
                  \label{eq:res.04}
                \end{eqnarray}
              The first inequality implies that the field flows inwards $A$ since $F_x(0,y)\ge0$ 
              for $0\le y\le 1$. 

              \item \textbf{If $y=0$ and $x\in[0,1]$}: 
                \begin{eqnarray}
                  F_x(x,0) &=& c\left[ (1-x)(1-k)s - x(1-s)(1-x)^a \right], \nonumber\\ 
                  F_y(x,0) &=& c(1-k)(1-s)(1-x)^a\ge0. 
                  \label{eq:res.05}
                \end{eqnarray}
              For $y=0$ the field flows inwards $A$ since $F_y(x,0)\ge0$ for any $0\le x\le 1$. 

              \item \textbf{If $y=1-x$}: 
                \begin{eqnarray}
                  F_x(x,1-x) &=& c\left[ (1-x)(1-k)sx^a - x(1-s)(1-x)^a \right], \nonumber \\ 
                  F_y(x,1-x) &=& c\left[ x(1-k)(1-s)(1-x)^a - (1-x)s x^a\right]. 
                  \label{eq:res.06}
                \end{eqnarray}
              In this case the field flows inwards $A$ if and only if $-F_x(x,1-x)-F_y(x,1-x)\ge 0$
since $(-1,-1)$ is a normal vector to the straight line $y=1-x$ pointing towards the interior of
$A$. That condition is trivially satisfied and we conclude again that the field flows inwards into
$A$ through the segment $y=1-x$, $x\in(0,1)$.
            \end{enumerate}
        \end{proof}
    
      This lemma means that any real distribution of speakers between the available groups that
would evolve according to the proposed equations would remain feasible all the time.
    
    \subsection{Number of fixed points of the dynamics in $A$} 
      \label{sec:res.numberFixed}

      It is possible to find upper and lower limits to the number of equilibrium points that the
system displays for different $k$, $s$, and $a$. We find the equilibrium point of the system
wherever the nullclines (curves defined by $dx/dt=0$ and $dy/dt=0$) intersect each other. In $A$,
the equilibrium points that can be detected by a simple inspection of the system are $P_x=(1,0)$ and
$P_y=(0,1)$. We will term them \emph{trivial} fixed points. There is another trivial fixed point for
the dynamics, but it lays outside $A$: the point $(1,1)$. Depending on different values of the
mentioned parameters we shall find more equilibrium points inside $A$. Following the notation
introduced in this paragraph, we appreciate that the curves $x=1$ and $y=1$ are branches of the
nullclines of the system. We name them the \emph{trivial} branches. Our analysis will deal mainly
with the non trivial branches.
   
      With a preliminary analysis of the nullclines of the field ($F_x=0$, $F_y=0$) it is possible
to narrow down the number of fixed points in the interior of $A$ to a maximum of $3$: Equilibrium
points of the dynamics are found in the intersections of the nullclines. Equating both components
of the field to zero we get:
        \begin{eqnarray}
          (1-y)^a &=& {1-s\over s}{1\over 1-k}x(1-x)^{a-1}, \nonumber \\
          (1-x)^a &=& {s\over 1-s}{1\over 1-k}y(1-y)^{a-1}. 
          \label{eq:res.07}
        \end{eqnarray}
Multiplying these equations and isolating $y$: 
        \begin{eqnarray}
          y = {1\over 1+{x\over (1-k)^2(1-x)}}. 
          \label{eq:res.08}
        \end{eqnarray}
We must restrict ourselves to $x\ne 1 \ne y$ now to avoid divergences here and in following
equations, but this is enought to continue with our discussion.

      Substituting equation \ref{eq:res.08} into the second expression of equations \ref{eq:res.07}: 
        \begin{eqnarray} 
          (1-x)^a &=& {s(1-k)(1-x)\over (1-s)x}\left(1-{1\over 
          1+{x\over(1-k)^2(1-x)}} \right)^a \Rightarrow \nonumber \\
          &\Rightarrow& \left( 1-x\over x \right)^{a-1}\left(x+(1-k)^2(1-x)\right)^a = {s\over 1-s}(1-k). 
          \label{eq:res.09}
        \end{eqnarray}
If $(x^*,y^*)$ is an equilibrium point, $x^*$ must obey equation \ref{eq:res.09} and the
corresponding $y^*$ is obtained from equation \ref{eq:res.08}.

      From the left-hand side of equation \ref{eq:res.09}: 
        \begin{eqnarray} 
          g(x) &\equiv& \left(1-x\over x\right)^{a-1}\left(x+(1-k)^2(1-x)\right)^a. 
          \label{eq:res.10}
        \end{eqnarray}
If $a>1$ it is true that: 
        \begin{eqnarray}
          g(x) &>& 0, \>\>\>\> \forall x\in(0,1); \nonumber \\ 
          \lim_{x \to 0^+} g(x) &=& +\infty; \nonumber \\ 
          \lim_{x \to 1^-} g(x) &=& 0. 
          \label{eq:res.11}
        \end{eqnarray}
Furthermore, $g(x)$ has got a relative minimum and a relative maximum respectively at: 
        \begin{eqnarray}
          x^- &=& {1\over 2a}\left( 1-\sqrt{1-{4a(a-1)(1-k)^2\over 2k-k^2}} \right)\in(0,1), \nonumber \\ 
          x^+ &=& {1\over 2a}\left( 1+\sqrt{1-{4a(a-1)(1-k)^2\over 2k-k^2}} \right)\in(0,1). 
          \label{eq:res.12}
        \end{eqnarray}
Because all of this, the equation $g(x) = {s\over 1-s}(1-k)$ can only have one, two, or three
solutions in $x\in(0,1)$ for fixed $k$ and $s$, restricting thus the number of equilibrium points of
the whole system. 

      For $a=1$, $g(x)$ reduces to a straight line that might or might not fulfill $g(x)={s\over
1-s}(1-k)$ within the range of interest $x\in(0,1)$. Because $g(x)$ is a straight line, this
equality can be obeyed for just one value of $x$ at most. Thus for $a=1$ there is at most one more
fixed point within $A$, but it must not necessarily exist. 

      Finally, for $a<1$ the limits found in equation \ref{eq:res.11} swap: 
        \begin{eqnarray}
          \lim_{x \to 0^+} g(x) &=& 0, \nonumber \\ 
          \lim_{x \to 1^-} g(x) &=& +\infty; 
          \label{eq:res.13}
        \end{eqnarray}
and $g(x)>0$ in the whole range $x\in(0,1)$. It is also monotonically increasing within this range
and thus must always match ${s\over 1-s}(1-k)$ in exactly one point $x^*$ internal to $A$. So for
$a<1$ there is always one fixed point besides the trivial ones.

    \subsection{Stability of the equilibrium points $P_x=(0,1)$ and $P_y=(1,0)$ for $a>1$} 
      \label{sec:res.stab01}
   
      We assess the stability of the system by evaluating the matrix at the existing fixed points,
diagonalizing it, and considering the sign of the eigenvalues. For non trivial equilibrium points it
becomes complicated to exactly locate them on the $x-y$ plane, left asside its analysis through the
Hessian matrix; but for $P_x$ and $P_y$ and restricting ourselves to $a>1$ we can evaluate the
Hessian matrix explicitly and it happens to be diagonal already: 
        \begin{eqnarray}
          DF(0,1) &=& \left( \begin{array}{cc}
                -c(1-s) & 0 \\ 
                0 & -c(1-k)(1-s)
               \end{array} \right), \nonumber \\ 
          DF(1,0) &=& \left( \begin{array}{cc}
                -c(1-k)s & 0 \\ 
                0 & -cs
               \end{array} \right). 
          \label{eq:res.14}
        \end{eqnarray}
Furthermore, the eigenvalues are negative meaning that $P_x$ and $P_y$ are asymptotically stable
independently of the values of $k$ and $s$ for $a>1$.

      Because $P_x$ and $P_y$ are always stable for $a>1$ and the field is such that all
trajectories enter $A$, \emph{if} there is \emph{only one} more equilibrium point $x^*$ interior to
$A$ it must be a saddle point and lie exactly at the frontier between the basins of attraction of
$(0,1)$ and $(1,0)$. If $x^*$ were unstable yet not a saddle point, either there would exist two
more fixed points where the boundaries between basins cross the frontier of $A$, or there would
exist trajectories leaving $A$; and neither of these is the case. If $x^*$ were stable it would have
a basin of attraction for itself and new fixed points would need to exist in the separation between
different basins.

    \subsection{Studying the field in different regions of $A$ for $a>1$}
      \label{sec:res.field}
   
      Now we will get more insights about the dynamics by further characterizing the field $F$ at
the boundary of $A$ and in its interior. For this analysis we must assume $a>1$, otherwise some of
the functions that we will be making use of will be ill-defined.

      \subsubsection{Studying $F$ at the boundary of $A$}
        \label{sec:res.field.boundary}
	
        Recalling $F_y(0,y)$ from equation \ref{eq:res.04}, we introduce: 
          \begin{eqnarray}
            G_y(y) &\equiv& (1-k)(1-s) - ys(1-y)^{a-1}, 
            \label{eq:res.15}
          \end{eqnarray}
and we note that it is continuous on the interval $[0,1)$, strictly decreasing on $(0,1/a)$ and
strictly  increasing on $(1/a,1)$. Since:
          \begin{eqnarray}
            G_y(0) = G_y(1) &=& (1-k)(1-s)>0, 
            \label{eq:res.16}
          \end{eqnarray}
we can find out if this function ever changes its sign by evaluating it at its minimum: $G_y(1/a)$.
We get either:
          \begin{eqnarray}
            {(a-1)^{a-1} \over a^a} &<& (1-k){1-s\over s}, 
            \label{eq:res.17}
          \end{eqnarray}
which would imply that $F_y(0,y)>0 \>\> \forall y\in[0,1)$; or: 
          \begin{eqnarray}
            {(a-1)^{a-1} \over a^a} &\ge& (1-k){1-s\over s}, 
            \label{eq:res.18}
          \end{eqnarray}
which would imply that there would exist $y_1, y_2\in(0,1)$ such that $F_y(0,y)>0$ if
$y\in[0,y_1)\cup(y_2,1]$ and $F_y(0,y)<0$ for all $y\in(y_1,y_2)$. In this case $F_y$ is zero at
$(0, y_1)$ and $(0,y_2)$. Let us note that $y_1=y_2$ if the equality holds on equation
\ref{eq:res.18}. 

        Likewise, recalling $F_x(x,0)$ from equation \ref{eq:res.05}, we define: 
          \begin{eqnarray}
            G_x(x) &\equiv& (1-k)s-x(1-s)(1-x)^{a-1}, 
            \label{eq:res.19}
          \end{eqnarray}
which is continuous on $x\in[0,1)$, strictly decreasing on $(0,1/a)$ and strictly increasing on
$(1/a,1)$.  Also:
          \begin{eqnarray}
            G_x(0) = G_x(1) &=& (1-k)s>0, 
            \label{eq:res.20}
          \end{eqnarray}
thus we find either: 
          \begin{eqnarray}
            {(a-1)^{a-1} \over a^a} &<& {s\over 1-s}(1-k), 
            \label{eq:res.21}
          \end{eqnarray}
which would imply $F_x(x,0)>0 \>\> \forall x\in[0,1)$; or: 
          \begin{eqnarray}
            {(a-1)^{a-1} \over a^a} &\ge& {s\over 1-s}(1-k), 
            \label{eq:res.22}
          \end{eqnarray}
which would imply that there would exist $x_1, x_2\in(0,1)$ such that $F_x(x,0)>0$ if
$x\in[0,x_1)\cup(x_2,1]$ and $F_x(x,0)<0$ for all $x\in(x_1,x_2)$. In this case $F_x$ is zero at
$(x_1,0)$ and $(x_2,0)$. Once again: $x_1=x_2$ if the equality holds on equation \ref{eq:res.22}.

        Let us note that if the strict inequalities \ref{eq:res.17} and \ref{eq:res.22} are
simultaneously true then $s<1/2$ and if the strict inequalities \ref{eq:res.18} and \ref{eq:res.21}
are simultaneously true then $s>1/2$. \\

        On the diagonal $x+y=1$ the field takes the form written in equation \ref{eq:res.06} and the
signs of $F_x(x,1-x)$ and $F_y(x,1-x)$ can be studied analyzing $\left({1-x\over x}\right)^{a-1}$
and $\left({x\over 1-x}\right)^{a-1}$ respectively: 
   
        Since the function $\left({1-x\over x}\right)^{a-1}$ is strictly decreasing on $(0,1)$ and: 
          \begin{eqnarray}
            \lim_{x \to 1^-} \left({1-x\over x}\right)^{a-1} &=& 0, \nonumber \\ 
            \lim_{x \to 0^+} \left({1-x\over x}\right)^{a-1} &=& +\infty; 
            \label{eq:res.23}
          \end{eqnarray}
then there exists only one $z_x\in(0,1)$ such that $\left({1-z_x\over z_x}\right)^{a-1}=(1-k){s\over
1-s}$ for fixed $k$ and $s$. Therefore $F_x(x,1-x)<0$ if $x\in(0,z_x)$ and $F_x(x,1-x)>0$ if
$x\in(z_x,1)$.

        With a similar argument for $\left({x\over 1-x}\right)^{a-1}$ it can be warranted the
existence of only one $z_y\in(0,1)$ such that $\left({z_y\over 1-z_y}\right)^{a-1} = (1-k){1-s\over
s}$ for fixed $k$ and $s$. Then $F_y(x,1-x)>0$ if $x\in(0,z_y)$ and $F_y(x,1-x)<0$ if $x\in
(z_y,1)$. 

        It can be trivially shown that $z_y<z_x$. Also it is true that
$\left|F_x(x,1-x)\right|>F_y(x,1-x)$ if $x\in (0,z_y)$ and $F_x(x,1-x) < \left| F_y(x,1-x) \right|$
if $x\in (z_x,1)$.

      \subsubsection{Further study of $F_x$} 
        \label{sec:res.field.Fx}
	
        If $x\ne0$ and $x\ne1$, then the points $(x,y)$ which nullify the first component of the
field $F$ are those that obey:
          \begin{eqnarray}
            y &=& 1-\left({1-s\over s}\right)^{1/a} {1\over (1-k)^{1/a}}x\left({1-x\over x}\right)^{1-1/a}. 
            \label{eq:res.24}
          \end{eqnarray}
The function $h_x:[0,1]\rightarrow \mathbb{R}$ is defined as: 
          \begin{eqnarray}
            h_x(x) &=& 1-\left({1-s\over s}\right)^{1/a} {1\over (1-k)^{1/a}}x\left({1-x\over x}\right)^{1-1/a} 
            \label{eq:res.25}
          \end{eqnarray}
on $x\in(0,1]$ and $h_x(0)=1$. It is strictly decreasing on $(0,1/a)$ and increasing on $(1/a,1)$,
it has got a minimum at $1/a$, and $h_x(0) = 1 = h_x(1)$.

        If $y<h_x(x)$ then $F_x(x,y)>0$. If $y=h_x(x)$ then $F_x(x,y)=0$. If $y>h_x(x)$ then
        $F_x(x,y)<0$.

        If the parameters of the system are such that inequality \ref{eq:res.21} holds, then
$h_x(1/a)>0$ and the plot of $h_x(x)$ intersects the boundary of $A$ at $(0,1)$ and $(z_x, 1-z_x)$
(figs. \ref{fig:aFixed}\textbf{a}-\textbf{b}). 

        If the strict inequality \ref{eq:res.22} holds true then $h(1/a)<0$ and the plot of $h_x(x)$
intersects the boundary of $A$ at $(0,1)$, $(x_1,0)$, $(x_2,0)$, and $(z_x,1-z_x)$ (figs.
\ref{fig:aFixed}\textbf{c}-\textbf{e}).

        Additionally, since $F_y(x_2, h_x(x_2))>0$ and $F_y(z_x,h_x(z_x))<0$, because $F_y$ and
$h_x$ are continuous, it is warranted the existence of $p\in(x_2,z_x)$ such that $F(p,h_x(p))=(0,0)$
--i.e. a fixed point of $F$. 

      \subsubsection{Further study of $F_y$} 
        \label{sec:res.field.Fy}
	
        If $y\ne0$ and $y\ne1$, then the points $(x,y)$ which nullify the second component of the
field $F$ are those which obey:
          \begin{eqnarray}
            x &=& 1-\left( s\over1-s \right)^{1/a}{1\over (1-k)^{1/a}}y^{1/a}(1-y)^{1-1/a}. 
            \label{eq:res.26}
          \end{eqnarray}
We define $h_y(y)$ similarly as we defined $h_x(x)$. This function is strictly decreasing on
$(0,1/a)$ and increasing on $(1/a,1)$, and $h_y(0)=1=h_y(1)$.

        If inequality \ref{eq:res.17} holds true then $h_y(1/a)>0$ and the curve $\{(h_y(y),y),
y\in[0,1]\}$ intersects the boundary of $A$ at $(1,0)$ and $(z_y,1-z_y)$ (fig.
\ref{fig:aFixed}\textbf{b}).

        If the strict inequality \ref{eq:res.18} holds true then $h_y(1/a)<0$ and the curve
$\{(h_y(y),y), y\in[0,1]\}$ intersects the boundary of $A$ at $(1,0)$, $(0,y_1)$, $(0,y_2)$, and
$(z_y,1-z_y)$ (figs. \ref{fig:aFixed}\textbf{a}, and \ref{fig:aFixed}\textbf{c}-\textbf{e}).

        Additionally, since $F_x(h_y(y_2), y_2)>0$ and $F_x(h_y(1-z_y),1-z_y)<0$, because $F_x$ and
$h_y$  are continuous, it is warranted the existence of $q\in(y_2,1-z_y)$ such that
$F(h_y(q),q)=(0,0)$ --i.e. a fixed point of $F$. \\

        The evolution of both $h_x(x)$ and $h_y(y)$ as a function of the parameters $a$, $s$, and
$k$ is partially shown in figs. \ref{fig:aFixed} and \ref{fig:aVary}, and can be dynamically
explored in \cite{Maquez}.

    \subsection{The nature of the orbits help us assess the stability of non-trivial fixed points} 
      \label{sec:res.field.orbit}
   
      The nullclines are always landmarks of the dynamic system under research. Their obvious use is
to locate the equilibrium points in their intersections, but more information can be extracted if we
look at them carefully. In section \ref{sec:res.field} we used them to find out how the field
behaves in the boundaries of $A$ as they mark the sets of points where the vertical and horizontal
components of the field are nullified. This applies also in the interior of $A$: The trajectories of
the system pass by with vertical tangent through the points of the curve $F_x(x,y)=0$, and with
horizontal tangent through the points of the curve $F_y(x,y)=0$. But also, these curves divide $A$
in regions within which the signs of $F_x$ and $F_y$ are well determined. Topological arguments
regarding the action of $F$ upon these different regions of $A$ help us put some limits to the kind
of orbits that the system can yield: we will see that periodic dynamics can be banned. These
considerations also let us find out whether non-trivial points are stable or not for $a>1$. 

      In this range of $a$ the system will always have at least one more equilibrium point in the
interior of $A$. We have seen that this can be deduced either from the crossings of $h_x(x)$ and
$h_y(y)$ with the boundary of $A$ or from equation \ref{eq:res.09} attending to the shape of $g(x)$.
The analysis of $g(x)$ let us further know that also two or at maximum three fixed points can exist
inside $A$. These three, two, or one equilibrium points will show up depending on the values of the
parameters $s$, $k$, and $a$. Many possibilities are illustrated in fig. \ref{fig:aFixed} and in
\cite{Maquez}.

      With this in mind, let us consider the following regions: 
        \begin{eqnarray}
          R_1 &=& \{ (x,y)\in A / F_x(x,y)<0, F_y(x,y)>0 \}, \nonumber \\ 
          R_2 &=& \{ (x,y)\in A / F_x(x,y)>0, F_y(x,y)<0 \}. 
          \label{eq:res.27}
        \end{eqnarray}
Equivalently: 
        \begin{eqnarray}
          R_1 &=& \{ (x,y)\in A / x\le h_y(x), y\ge h_x(x) \}, \nonumber \\ 
          R_2 &=& \{ (x,y)\in A / x\ge h_y(x), y\le h_x(x) \}. 
          \label{eq:res.28}
        \end{eqnarray}
In a similar way we could introduce: 
        \begin{eqnarray}
          B_1 &=& \{ (x,y)\in A / F_x(x,y)>0, F_y(x,y)>0 \}, \nonumber \\ 
          B_2 &=& \{ (x,y)\in A / F_x(x,y)<0, F_y(x,y)<0 \}; 
          \label{eq:res.29}
        \end{eqnarray}
but these will not be interesting for us right now. 

      Focusing on $R_1$ and $R_2$, they have got one or two connected components depending on if
inside $A$ there exist one, two, or three equilibrium points. We shall write $R_1=A_1\cup A_3$,
$R_2=A_2\cup A_4$; being $A_3$ or $A_4$ empty if on the interior of $A$ there are not three
equilibrium points, and $A_1$ and $A_2$ the regions whose boundaries contain respectively
$P_y=(0,1)$ and $P_x=(1,0)$. An account of these regions for some values of the parameters can be
seen in fig. \ref{fig:regions}.

      Taking into account the sign of the components of the field we can tell that regions $A_1$,
$A_2$, $A_3$, and $A_4$ are positive invariant. If $(x_0,y_0)\in A_1$ then its trajectory
$(x(t),y(t))$ for $t\in I_t[0,\infty)$ lays in $A_1$ because in the boundary of $A_1$ the field
points inwards. This trajectory is thus contained in a compact for $t\in I_t$. Also, since
${dx(t)\over dt}<0$ and ${dy(t)\over dt} >0$, it can be verified that $x(t)$ is monotonously
decreasing on $I_t$ and $y(t)$ is monotonously increasing on $I_t$. Consequently it exists the limit
$\lim_{t \to +\infty} (x(t),y(t))=P_y$. The set $A_1$ is therefore contained in the basin of
attraction of $P_y$. Analogously, it can be shown that if $(x_0,y_0)\in A_2$ then $\lim_{t \to
+\infty} (x(t),y(t))=P_x$: its trajectory lays in the region of attraction of $P_x$; thus, the basin
of attraction of $P_x$ contains $A_2$. 

      Also, if $A_3$ and $A_4$ are both non-empty and $(x_0,y_0)\in A_3\cup A_4$ its trajectory
remains either inside $A_3$ or inside $A_4$ and converges towards the equilibrium point at the
intersection of the frontiers of these regions. Since $A_3$ and $A_4$ are non-empty only when there
are three equilibrium points inside $A$, this result means that one of these three points, whenever
they exist, must be stable. In this case we can determine that the two remaining fixed points in the
interior of $A$ must be saddle points. We do so with an argument similar to the one we used to show
that $x^*$ is a saddle point when only one equilibrium point exists inside $A$ (section
\ref{sec:res.stab01}). We further deduce that the case with two interior fixed points corresponds to
a saddle-node bifurcation and that this situation is the frontier between those cases with one and
three equilibrium points in the space of parameters $k-s$. 

      Concerning the dynamics of the system, it is important the following lemma which ensures that
there is no oscillatory behavior:
        \begin{lemma}
          There are not any periodic orbits in the x-y plane. 
          \label{lem:02}
        \end{lemma}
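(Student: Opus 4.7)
The plan is to argue by contradiction using the partition of $A$ into the monotone regions $A_1, A_2, A_3, A_4$ and $B_1, B_2$ already constructed in section \ref{sec:res.field.orbit}. Suppose $\gamma$ is a non-constant periodic orbit of period $T > 0$. By Lemma \ref{lem:01} the orbit is confined to $A$; since the boundary flow is strictly inward, $\gamma$ must in fact lie in the interior of $A$, and it cannot pass through any of the (isolated) equilibria.

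The first step is to show that $\gamma$ cannot meet any of the four wings $A_1, A_2, A_3, A_4$. Each $A_i$ is positive invariant and inside it one of $x(t), y(t)$ is strictly increasing and the other strictly decreasing; hence every forward semi-orbit in $A_i$ converges to the attracting equilibrium in $\overline{A_i}$, which is incompatible with $\gamma$ being periodic. Consequently $\gamma$ is forced to live in $B_1 \cup B_2$ together with (at most) transverse nullcline crossings between these two regions.

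The second step is to rule out $B_1 \cup B_2$. In $B_1$ both components of $F$ are strictly positive and in $B_2$ both are strictly negative, so along any arc of $\gamma$ inside $B_1$ (resp. $B_2$) both $x(t)$ and $y(t)$ are strictly monotone. A closed orbit cannot have a coordinate monotone on its whole period, so $\gamma$ must migrate from $B_1$ to $B_2$ through a nullcline crossing. But at any boundary point of $B_1$ lying on $\{F_x = 0\}$ that is not an equilibrium one has $F_y > 0$ by continuity, so the crossing lands in $A_1$; symmetrically, crossings of $\{F_y = 0\}$ on $\partial B_1$ land in $A_2$. Either way $\gamma$ re-enters a region forbidden by step one, a contradiction.

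The main delicate point is the possibility that $\gamma$ grazes a nullcline tangentially from inside $B_1$ and returns to $B_1$; such a tangency requires $F$ to be parallel to the nullcline at the contact point, which forces both components of $F$ to vanish, so the contact point must be an equilibrium, and this is precluded by the non-triviality of $\gamma$. If this geometric case analysis turns out to be unwieldy, a cleaner fallback is Bendixson--Dulac: the cross-partials $\partial F_x / \partial y$ and $\partial F_y / \partial x$ are strictly negative on the interior of $A$, so $F$ is strictly competitive in the sense of Hirsch, and a suitable positive Dulac function $B(x,y)$ built from powers of $x, y, 1-x, 1-y$ should render $\nabla \cdot (B F)$ of fixed sign on the interior of $A$, precluding periodic orbits directly.
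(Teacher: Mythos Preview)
Your argument is correct and reaches the same contradiction as the paper---entry into a positively invariant wing $A_i$---but by a different route. The paper's proof invokes Poincar\'e--Bendixson in one line: a closed orbit must enclose an interior fixed point, hence must wind around the nullcline intersection and therefore enter and exit some $A_i$, contradicting positive invariance. You bypass Poincar\'e--Bendixson and argue directly from monotonicity: exclude the $A_i$ first, then observe that a periodic orbit trapped in $B_1\cup B_2$ would need to pass from $B_1$ to $B_2$, which forces a detour through $R_1\cup R_2$. Your version is more elementary and self-contained; the paper's is shorter. Two small fixes: a nullcline crossing out of $B_1$ lands in $R_1=A_1\cup A_3$ (or $R_2=A_2\cup A_4$), not specifically in $A_1$ or $A_2$; and your tangency claim is not quite right as stated---on $\{F_x=0\}$ the field is vertical, so tangency would only require the nullcline to have a vertical tangent, not $F=0$---but the issue dissolves once you note that $\overline{B_1}\cap\overline{B_2}$ consists exactly of the equilibria, so any non-equilibrium passage from $B_1$ to $B_2$ must cross the open set $R_1\cup R_2$. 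Your fallback observation that $\partial F_x/\partial y<0$ and $\partial F_y/\partial x<0$ on the interior of $A$ is also correct and would give an independent proof: the system is strictly competitive in the sense of Hirsch, and planar competitive systems admit no periodic orbits.
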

      	\begin{proof}
          Actually, because of the regularity of the field, applying the Poincar\'e-Bendixson
theorem it can be deduced that if there would exist any closed orbit it must enclose a fixed point
on its interior. Thus, the periodic orbit would necessarily enter and exit two of the regions $A_i$.
This cannot happen because all regions $A_i$ are positive invariant. 
        \end{proof}

      This same argument also implies that fixed points cannot be foci, because trajectories
approaching them  should cross many times the frontiers between regions, some of which are positive
invariant and cannot be left.

    \subsection{Tentative solutions for $a\le 1$}
      \label{sec:sola1}
   
      Splitting the problem in $a>1$ on the one side and $a\le 1$ on the other made its solution
easier because several of the reasonings that work very well in the former case are built on
functions that are ill-defined in the later. An example are the functions $G_x(x)$ and $G_y(y)$, but
also the Hessian matrix in $(1,0)$ and $(0,1)$ present some problems for $a<1$. Luckily enough, in
section \ref{sec:res.numberFixed} we proved that there is just one more equilibrium point
$(x^*,y^*)$ for $a\le 1$ which always appears if $a<1$ and that might not appear for $a=1$ depending
on the parameters, so we do not need to investigate $3$ prospective fixed points as for $a>1$. Also
it is still valid the demonstration that $A$ is positive invariant made in section
\ref{sec:res.realTrajec}. 

      In fig. \ref{fig:aVary} the nullclines are represented for various values of $a$ and fixed $k$
and$s$. We can observe $(x^*,y^*)$ in the intersections, and we can also observe how the nullclines
suffer a deep transformation as values of $a$ larger than $1$ are employed. We shall study now the
cases $a=1$ and $a<1$. Because of the analytic results are not so satisfactory, we shall complement
them using numerical simulations whenever it is useful. These results should be questioned as long
as a complete mathematical proof is not available. \\

      It is particularly illustrative the resolution of the stability of $P_x=(1,0)$ and $P_y=(0,1)$
for $a=1$. This can still be analytically done. The Hessian matrix in this case reads:
        \begin{eqnarray}
          DF(0,1) &=& \left( \begin{array}{cc}
                -c(1-s) & -c(1-k)s \\ 
                0 & -c(1-k)(1-s)+cs
               \end{array} \right), \nonumber \\ 
          DF(1,0) &=& \left( \begin{array}{cc}
                -c(1-k)s + c(1-s) & 0 \\ 
                -c(1-k)(1-s) & -cs
               \end{array} \right). 
          \label{eq:res.30}
        \end{eqnarray}
The eigenvalues are $\lambda_{P_y}^1=-c(1-s)$ and $\lambda_{P_y}^2=-c(1-k)(1-s)+cs$ for $P_y$ and
$\lambda_{P_x}^1= -c(1-k)s + c(1-s)$ and $\lambda_{P_x}^2=-cs$ for $P_x$. We see that one of the
eigenvalues is always the sum of two terms with different sign and this compromises the stability of
$P_x$ and $P_y$. Indeed, their stability depends now on the parameters $k$ and $s$. By equating the
conflictive terms to zero we obtain two curves relating $k$ and $s$: 
        \begin{eqnarray}
          s_{P_y} &=& {1-k_{P_y}\over 2-k_{P_y}}, \nonumber\\
          s_{P_x} &=& {1\over 2-k_{P_x}}. 
          \label{eq:res.31}
        \end{eqnarray}
These curves tell us where does the stability of $P_x$ and $P_y$ change in the space of parameters
$k-s$: $P_x$ is stable for $s>s_{P_x}$ and $P_y$ is stable for $s<s_{P_y}$. There is a region of
values $s_{P_y} < s < s_{P_x}$ where neither $P_x$ nor $P_y$ are stable and, since there are not
any trajectories leaving $A$, there must exist a point $(x^*,y^*)$ interior to $A$ that is stable.

      The curves $s_{P_x}(k_{P_x})$ and $s_{P_y}(k_{P_y})$ are plotted in fig.
\ref{fig:phases}\textbf{a}. There it is shown the stability of the different stable points for
different values of $a$ in the $k-s$ space, but aided by computer simulations. We see that the
numerical results match the analytical results for $s_{P_x}(k_{P_x})$ and $s_{P_y}(k_{P_y})$, and
that these curves seem to evolve into the boundaries between different regimes as $a$ takes values
larger than $1$. \\

      For $a<1$ it is not possible to work out the stability of $P_x$ nor $P_y$ in a rigorous way.
The Hessian matrix has diverging terms in this case: it is not well defined. We know, though, that
there is always a third fixed point inside $A$ and computer simulations suggest that this interior
equilibrium point is always stable for $a<1$. This would be in agreement with results obtained for
the simpler precursor model \cite{Abrams-Strogatz} where a similar stable point is found for $a<1$
\cite{Vazquez-SanMiguel, Patriarca-SanMiguel}.

    \section{Discussion and conclusions} 
      \label{sec:discussion}
  
      The analytical results that we reached here are in partial discordance with those reported
from previous numerical works \cite{Mira-Nieto}. Although the number of fixed points reckoned now
could agree with previous accounts, the stability of them has been misread because of reasons that
will be obvious right now. Notwithstanding this, the conclusions from \cite{Mira-Nieto} remain
largely the same, as we will see. We first discuss the results for $a>1$ to compare directly with
previous works and then we make some remarks about the cases $a\le1$. 

      In \cite{Mira-Nieto} the stability of the system was assessed through computer simulations
only, thus stable fixed points were partly identified. Non-stable or saddle points did not stand out
in these simulations because specific tests were not run therefore. It was concluded that one, two,
or three stable equilibrium points existed in $A$, including the trivial ones at its boundary:
$P_x$, $P_y$, and $P^*=(x^*,y^*)$; the later being the only equilibrium point explicitly detected in
the interior of $A$. This matches exactly the picture drawn from the current work. The discrepancies
arise regarding the stability of $P_x$ and $P_y$: The computational tools used in \cite{Mira-Nieto}
yielded a result that strongly suggested that this stability depended upon $k$ and $s$ (dependence
upon $a$ was not addressed: it was taken $a=1.31$ for historical reasons) and that $P^*$ was stable
whenever it could be detected by simulations--as it could only be detected if it was an attractor of
the dynamics in the discretized version of equations \ref{eq:res.02}.

      A plot was elaborated in \cite{Mira-Nieto} that divided the $k-s$ space of parameters in five
regions that would correspond to five different stability/instability combinations of the
equilibrium points. Namely: i) only $P_x$ or only $P_y$ is stable, $P^*$ is not detected; ii) both
$P_x$ and $P_y$ are stable, $P^*$ is not detected; iii) the three possible points are detected and
stable; iv) $P^*$ and either only $P_x$ or only $P_y$ are stable; and v) only $P^*$ is stable. In
all five cases $P_x$ and $P_y$ were supposed to exist and to be instable whenever their basins of
attraction were found empty by the computer simulations. An updated version of that plot is
reproduced in fig. \ref{fig:phases}\textbf{c} with the same five regions colored with different
shades of blue and green.

      The interpretation given in \cite{Mira-Nieto} was not right although it was consistent with
the numerical outcome. For example: we now know that $P_x$ and $P_y$ are always stable for $a>1$,
disregarding the values of $k$ and $s$. We also know that at least one equilibrium point exists
always in the interior of $A$, which may not be stable and which may lay in the boundary between the
basins of attraction of $P_x$ and $P_y$. In fig. \ref{fig:aFixed} they are shown the plots of the
nullclines for the exact same parameters as those used in \cite{Mira-Nieto} and we readily see how
for certain parameters some saddle points interior to $A$ approach $P_x$ or $P_y$ leading to a
reduction of their basins of attraction. We now know that these equilibrium points never collapse
into an unstable point. A basin of attraction may become undetectable to numerical means, and thus
$P_x$ and $P_y$ may be deemed instable; but we have now found out analytically that $P_x$ and $P_y$
remain stable for any value of $k$ and $s$ if $a>1$. We were able to reproduce this numerical effect
for the parameters used in \cite{Mira-Nieto} and for many others, as it can be seen in fig.
\ref{fig:phases}\textbf{b}-\textbf{d}: lighter shades of blue or green indicate sets of parameters
for which the computer simulations led to a wrong interpretation of the stability of some of the
fixed points.

      For $a>1$, the updated, more correct picture is as follows: There are always 2 stable trivial
fixed points $P_x$ and $P_y$ and depending on the parameters $k$ and $s$ there are $1$, $2$, or $3$
more fixed points in the interior of $A$. It can be shown that if three points exist, one of them
(termed $P^*$) must be stable. It can be argued that if there is only one fixed point inside $A$ it
must be a saddle point; and that if there are three, those equilibrium points different from $P_x$,
$P_y$, and $P^*$ must be saddle points as well. It can be guessed that the situation with $2$
equilibrium points inside $A$ corresponds to a saddle-node bifurcation as we transit through the
$k-s$ space from a region with one to a region with three non-trivial fixed points. The existing
analytical evidence and fig. \ref{fig:phases}, that shows the results of refined numerical
simulations, are consistent with this view; although those facts that were not analytically proven
in section \ref{sec:resolution} must be taken with enough care. 

      These equilibrium fixed points and their stability have got a direct interpretation for the
phenomenon for which the model was developed in the first place: they determine whether two
coexisting languages would remain alive together, or if one of them is going to take over and
extinguish the other. Also, the nature of the fixed points reached by the dynamics determines
whether individual bilingualism can be a stable trait. Given a pair of languages $X$ and $Y$ that
coexist with status $s_X=s$ and $s_Y=1-s$ and interlinguistic similarity $k$ in a society with a
fixed value of $a>1$, the model presents two well differentiated regions: 
        \begin{enumerate}

          \item \textbf{Coexistence is unstable}: one language ends up suppressing the other and the
bilingual group. What language survives depends on the initial distribution of speakers among
monolinguals of each language and bilinguals. This case corresponds to only one equilibrium point--
which turns out not to be stable--in the interior of $A$ and is depicted in figs.
\ref{fig:aFixed}\textbf{a}-\textbf{b}. The regions in the $k-s$ space where this happens are colored
in blue in fig. \ref{fig:phases}\textbf{b}-\textbf{d}.

          \item \textbf{Coexistence is possible} depending on the parameters $k$, $s$, and $a$; and
on the initial conditions of the dynamics. This case is the one with three stable equilibrium points
$P_x$, $P_y$, and $P^*$. The initial conditions determine whether a language drives the other to
extinction and makes bilingualism disappear, or if a steady state is reached ($P^*$) in which groups
of monolingual speakers of both languages survive along with a bilingual group. This is what happens
in figs. \ref{fig:aFixed}\textbf{c}-\textbf{e}; and parameters $k-s$ for which we find this
situation are indicated in green in figs. \ref{fig:phases}\textbf{b}-\textbf{d}.

        \end{enumerate}
A third case regarding number of fixed points would exist at the boundary between these regions, but
it does not seem to introduce any new behavior attending to the coexistence of languages. Coming
back to the incomplete interpretation made of the results in \cite{Mira-Nieto}, the two cases just
outlined already include those configurations of parameters for which some attractors are so small
that the extinction of a language or the coexistence of both of them is almost unavoidable without
regard of the initial conditions; although we now know that this is never the case. 

      We can see from the numerical simulations in fig. \ref{fig:phases} that regions of the $k-s$
space where stable bilingualism is possible correspond to those with a more balanced status between
languages. This balance is not so important for larger interlinguistic similarity: then a stable
bilingual situation can be reached even for well distinct $s_X$ and $s_Y$, depending on the initial
distribution of speakers. Further illustration of the role of $k$ and $s$ is made in fig.
\ref{fig:bifurcation}, that shows qualitative bifurcation diagrams of the stable fixed points when
varying these parameters with fixed $a$.

      The parameter $a$ was found relatively constant among cultures as indicated before \cite
{Abrams-Strogatz}, and this justified why it was not payed that much attention. But now we have also
studied how the possible outcomes change as $a$ varies. First, considering only $a>1$, we see (fig.
\ref{fig:phases}) that a larger $a$ means that the possibilities for stable bilingualism are
reduced. This parameter was already considered in an analysis of the more basic Abrams-Strogatz
model \cite{Chapel-SanMiguel, Vazquez-SanMiguel, Patriarca-SanMiguel}, and it was cleverly termed
\emph{volatility parameter}: the lower $a$ the more \emph{volatile} a large group becomes and vice-
versa. Thus, for larger $a$ bigger groups are more persistent and it is smaller the set of
parameters for which it can be reached a more diluted distribution of speakers (this would be: a
solution with speakers belonging to the bilingual groups, or communities with monolingual groups of
each language coexisting together). For lower $a$, larger monolingual groups are not so permanent
and a steady solution is easier to reach in which all languages coexist.

      This volatility is a critical feature at $a=1$: $P_x$ and $P_y$ become unstable for some of
the parameters $k$ and $s$  as it was said before. In terms of language dynamics this means that a
monolingual group agglutinating all the speakers is  no longer possible, whatever the initial
conditions, if their statuses are close enough and depending on the similarity  between languages.
The region of the $k-s$ space where this happens can be seen in green in fig.
\ref{fig:phases}\textbf{a}.  In such cases the only stable solution in the long term is the
coexistence of the monolingual groups along with the bilingual  one. But still at $a=1$ one language
might extinguish the other if its status is larger enough. There is a crucial difference  between
such an extinction and those happening for $a>1$: before, both languages could survive depending on
the initial number  of speakers of each one; now the extinction does not depend on this initial
condition if the parameters are those needed for  a language to take over (blue regions in fig.
\ref{fig:phases}\textbf{a}).

      The possibility of language extinction seems to change completely for $a<1$: then the
volatility is so high  that the monolingual options are never stable and the survival of both
languages within their monolingual groups  and along a bilingual group of speakers is guaranteed for
any values of the parameters $k$ and $s$, and for any  initial distribution of speakers. It was not
possible to prove this very last result analytically beyond any doubt  and it was obtained thanks to
numerical simulations. This solution is consistent with similar outcomes for the seminal  Abrams-
Strogatz model \cite{Chapel-SanMiguel, Vazquez-SanMiguel, Patriarca-SanMiguel}, which should be the
limit  case for $k\rightarrow0$ and $b\rightarrow0$ of the equations under research in this paper.

      \subsection{Nature of the orbits and higher order contributions} 

        A very important contribution of this paper is that brought in by lemma \ref{lem:02}. It is
clear its mathematical meaning: because of the nature of the field $(F_x,F_y)$ around a fixed point
it is not possible to find closed--i.e. periodic--orbits. The same lemma also implies that any
solution must consist of an exponential decay towards a fixed point: that an oscillatory decay is
not possible. The interpretation of this result is rather strong when it comes down to languages:
the extinction or raise of languages must be a monotonous phenomenon according to the present
equations. Tendencies that could be expected, e.g. alternation in the preponderance of a language in
a region, should not be observed. If such result were derived from real data, research should be
focused on what is needed to complement the present model: ``what would be the minimum elements that
play a role in generating cyclic behavior in language competition?'', because those employed here
would not suffice. 

        This lemma has got also some implications even if we would consider higher order or
stochastic extensions of the present model. The equations investigated in this paper are nothing but
a deterministic, mean field approach to  a phenomenon that usually takes place on a stochastic
environment. The next more realistic strategy to  model language competition or coexistence
departing from our current equations would be simulations of  discrete agents that shift between the
monolingual or the bilingual groups at random, being the transition  probabilities given by
equations \ref{eq:res.01}. This is coherent if we suppose some free-will and  variability to the
speakers when deciding what language to use. We expect thus intrinsic stochasticity  to be present
and manifest throughout noise. The power spectrum of this noise can be investigated. The  nature of
the equilibrium points found for equations \ref{eq:res.02} establishes some important limitations
to the kind of dynamics that can arise, as we will argue.

        Several techniques are available in the literature to incorporate uncertainty in a
deterministic  model, from agent-based simulations with stochastic interaction events
\cite{Gillespie1976, Gillespie2007}  to theoretical considerations of a more analytical nature \cite
{Nieto-RodriguezLopez, Nisbet-Gurney,  McKane-Newman}. According to \cite{Hidalgo-Munoz}, we can
determine that some interesting phenomena are  ruled out from our model because of the lack of foci
fixed points (again, recalling lemma \ref{lem:02}).  Namely, it is not possible to find Stochastic
Amplification of Fluctuations (SAF), a phenomenon that offers  a possible explanation to emergent
quasi-oscillations observed in fields as diverse as ecology  \cite{McKane-Newman}, epidemiology
\cite{Alonso-Pascual}, or brain dynamics \cite{Hidalgo-Munoz, Wallace-Cowan}.  In SAF the spectrum
of the noise would present a prominent peak corresponding to these quasi-oscillations. Opposed  to
this, in our study case the power spectrum of intrinsic noise must present a monotonous decay
proportional  to $1/\omega^2$, and no outstanding peaks. We confirmed this result for many sets of
parameters with agent-based  simulations, as suggested above, finding no interesting features in the
spectra, in agreement with the SAF theory.

        SAF, if present, would be self evident in agent-based simulations. It could also be noted in
series of  real data if reports of language usage over time with enough precision were available.
SAF is usually associated  with adaptive reacting forces such as prey-predator or 
activator-repressor dynamics. Thus, reports of emergent  oscillations in language dynamics could 
warn us ofthe presence of such forces driving language competition and  serve for further, necessary
refinement of the model. Numerical evidence shows that periodic solutions may appear  if the status
$s$ of the languages were allowed to change over time, which is a rather realistic extension of the
model. Also, if there would exist models of language coexistence that presented SAF in a natural
way, the observation  (or the not observation) of this phenomenon in real data could help us
determine which one is closer to reality.

    \section*{Acknowledgement} 

      This research has been partially supported by Ministerio de Econom\'i­a y Competititvidad,
project MTM 2010-15314, and Xunta de Galicia and FEDER. The authors wish also to acknowledge  the
contribution of Beatriz M\'aquez, whose contribution in the preparation of some of the  material was
very helpful.

  \begin{figure}
    \includegraphics[width=\textwidth]{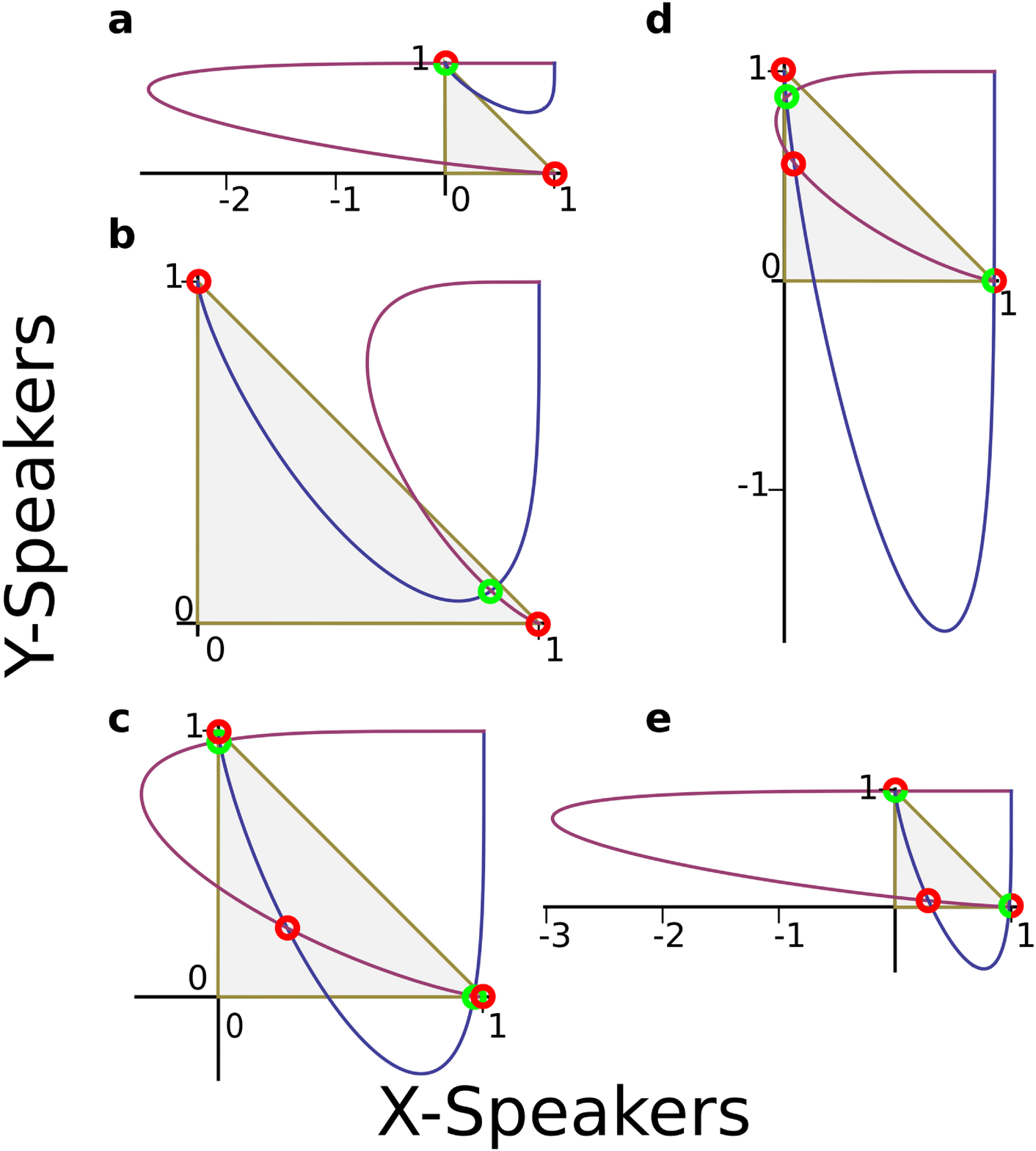}
    \caption{\textbf{Non-trivial nullclines of the system for different values of $k$ and $s$ and
fixed $a=1.31$. } Five pairs of parameters $(k,s)$ were used to plot the non-trivial branches of the
nullclines. The values were  chosen to compare our results with those from \cite{Mira-Nieto}, thus:
\textbf{a} $k = 0.65$, $s = 0.80$,  \textbf{b} $k = 0.20$, $s = 0.40$, \textbf{c} $k = 0.65$, $s =
0.50$, \textbf{d} $k = 0.75$, $s = 0.35$, and  \textbf{e} $k = 0.75$, $s = 0.35$. Stable fixed
points are indicated in red and saddle points in green. Contrary  to what was interpreted in \cite
{Mira-Nieto}, cases \textbf{a} and \textbf{b} on one side and \textbf{c}, \textbf{d},  \textbf{e} on
the other side are equivalent to each other. The five differentiated cases are obvious, though, if
we  attend to the sizes of certain basins of attraction: some of them can hardly be detected with
numerical systems. We  can appreciate in the different figures how important features, like
crossings with the boundary evolve for the different  parameters. This can be better grasped in
\cite{Maquez}. }
    \label{fig:aFixed}
  \end{figure}
  
  \begin{figure}
     \includegraphics[width=\textwidth]{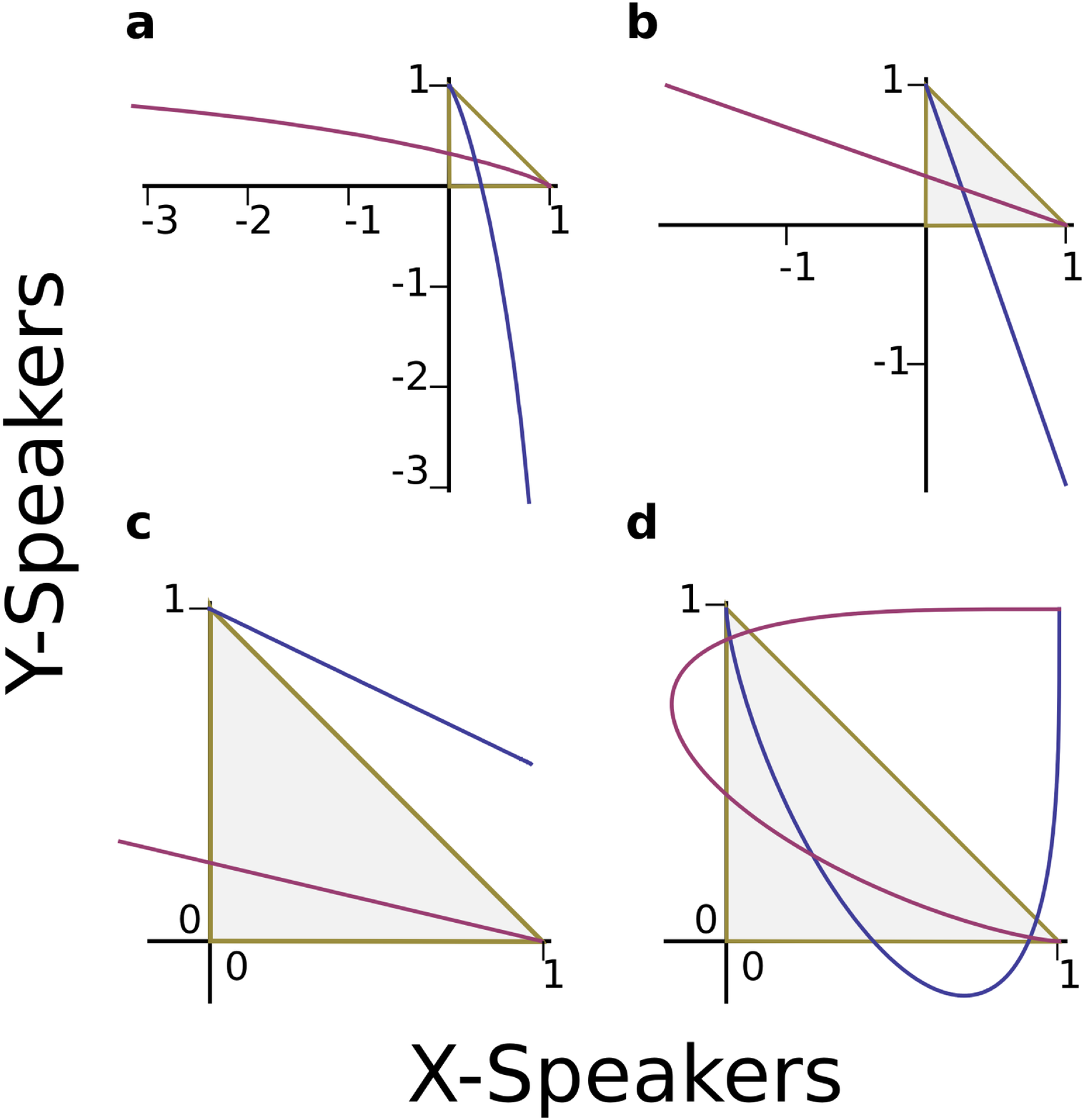}
     \caption{\textbf{Non-trivial nullclines of the system for different values of $a$. } Fixed
$k=0.65$ and $s=0.50$ is taken except in panel \textbf{c}. This figure illustrates how the non-
trivial branches of the nullclines suffer a sudden change as the parameter $a$ starting from $a<1$
increases through $a=1$ and to $a>1$. \textbf{a} For $a<1$ the nullclines must always cross inside
$A$. \textbf{b} $a=1$ is the only case when the nullclines are straight lines that go through
$(0,1)$ and $(1,0)$ respectively. As straight lines, depending on their slopes they might cross
inside $A$, but this is not necessarily the case (\textbf{c} $a=1$, $k=0.3$, and $s=0.75$).
\textbf{d} For $a>1$ the shape of the nullclines becomes more complicated and the possible
crossings must be carefully addressed. We saw the many possibilities in fig. \ref{fig:aFixed}. }
    \label{fig:aVary}
  \end{figure}
  
  \begin{figure}
    \includegraphics[width=\textwidth]{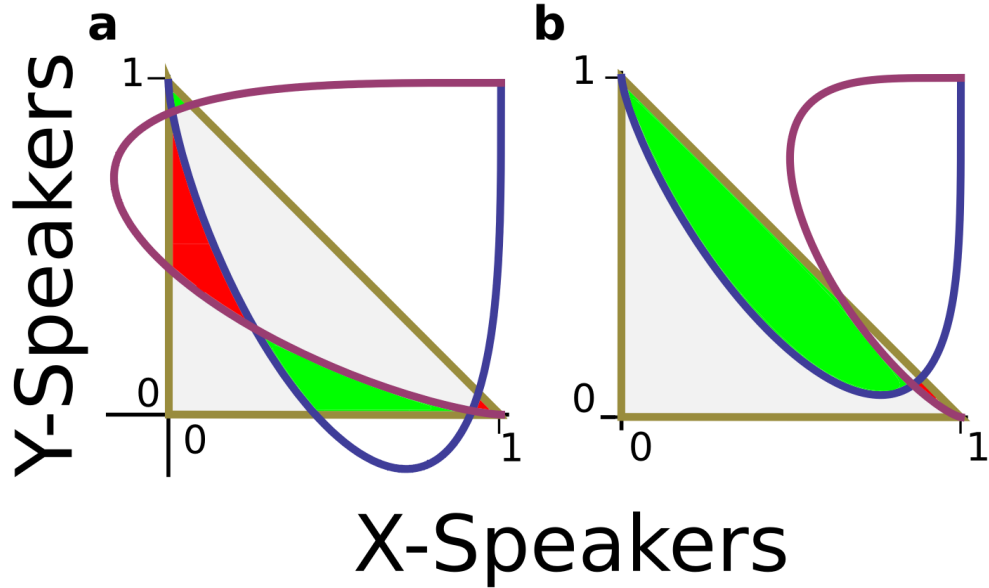}
    \caption{\textbf{Positive invariant regions $R_1$ and $R_2$ on the $x-y$. } We identify $R_1$,
where $F_x$ is negative and $F_y$ is always positive (green in the figure); and $R_2$, where $F_x$
is positive and $F_y$ is negative (red regions). \textbf{a} $a=1.40$, $s=0.50$, $k=0.65$. Three
equilibrium points exist inside $A$ and regions $A_3$ and $A_4$ are non-empty. Both $R_1$ and $R_2$
present two connected components ($A_1$ and $A_3$, and $A_2$ and $A_4$ respectively) which are, each
of them, positive invariant. This means that the dynamics do not exit any of these regions once they
enter: they must tend to a stable fixed point in their boundary. Thus, we see how points inside
$A_1$ are taken to $P_y$ and points inside $A_2$ are taken towards $P_x$. The only possibility for
regions $A_3$ and $A_4$ is that it exists another stable fixed points exactly in the joint between
the two of them. \textbf{b} $a=1.31$, $k = 0.20$, $s = 0.40$. In this case only one equilibrium
point exists inside $A$. Regions $A_3$ and $A_4$ are empty, but the same as before applies to
regions $A_1$ and $A_2$: points in their interior must be driven towards $P_y$ and $P_x$
respectively. }
    \label{fig:regions}
  \end{figure}
  
  \begin{figure}
    \includegraphics[width=\textwidth]{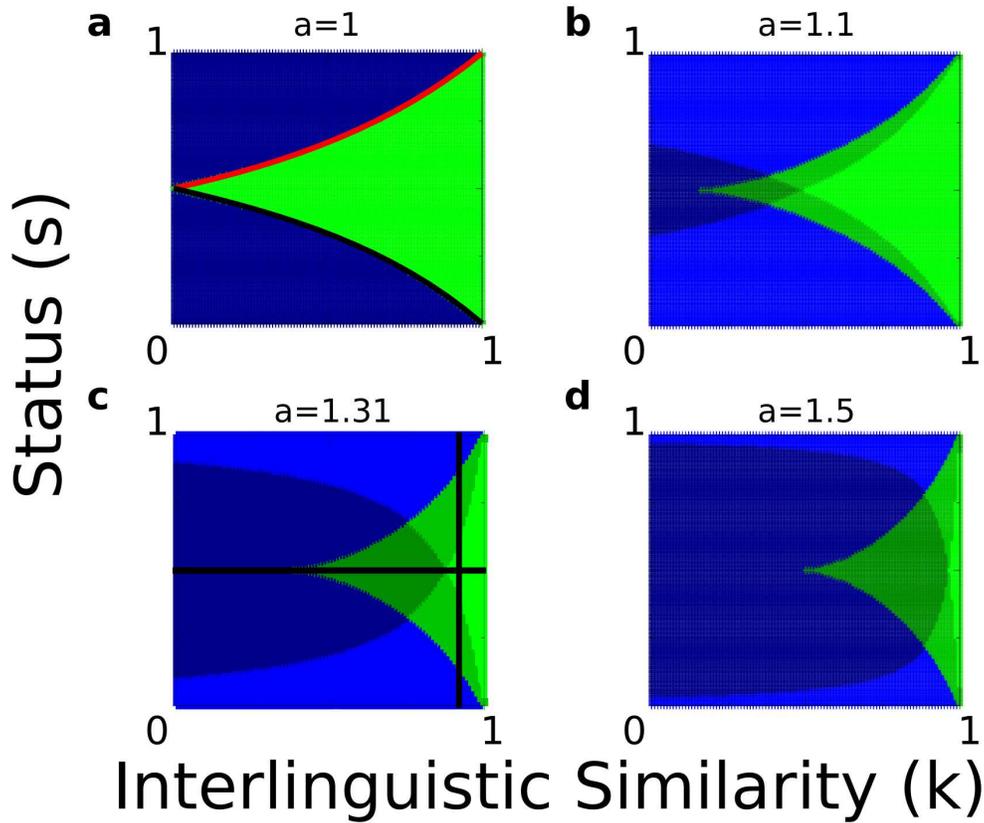}
    \caption{\textbf{Regions of the $k-s$ space with different number of equilibrium points. }
\textbf{a} a=1: It is possible to compute analytically the curves $s_{P_y}(k_{P_y})$ (thick black
line) and $s_{P_x}(k_{P_x})$ (thick red), which determine the frontiers above (below) which $P_y$
(respectively $P_x$) are unstable. The green area in which both $P_y$ and $P_x$ are unstable need
the existence of an equilibrium point inside $A$ which is stable to attract the dynamics. The blue
regions correspond to values of the parameters where either $P_x$ or $P_y$ are stable, but only one
of them; meaning, in terms of competing languages, that one tongue must extinguish the other
whatever the initial conditions. While the curves $s_{P_y}(k_{P_y})$ and $s_{P_x}(k_{P_x})$ were
computed analytically, the colored regions were found out through computer simulations: a point of
the $k-s$ space would be painted in blue if either of the attractors $P_x$ or $P_y$ were found after
evolving the system a time large enough, and it would be painted in green whenever the simulations
did not converge towards $P_x$ nor $P_y$ in a similar umber of iterations. Numerical and analytical
results agree. \textbf{b} $a=1.1$, \textbf{c} $a=1.31$, \textbf{d} $a=1.5$: In either of these cases
blue regions indicate that only the attractors $P_x$ and/or $P_y$ have been detected: both of them
were detected in dark blue regions while only one of them was detected in lighter blue regions.
Green regions indicate now that an attractor $P^*$ interior to $A$ has been detected: the darkest
area correspond with $P^*$ being detected along $P_x$ and $P_y$ and the lighter areas means that one
or two of $P_x$ and $P_y$ have not been detected. All these attractors were found numerically. The
different shades of blue and green aim at demonstrating how the stability of the fixed points $P_x$
and $P_y$ depending upon $k$ and $s$ has been misread in previous studies of the model 
\cite{Mira-Nieto}. The outcome of the simulations are consistent with the analytical results, but 
they must be taken as strict approximations as long as they remain numerical conclusions. We found 
an interesting case in panel \textbf{c} that corresponds to $a=1.31$: the value previously used in 
the literature to fit real data. Thick black lines indicate the transversal sections along which 
bifurcation diagrams are taken in fig. \ref{fig:bifurcation}. }
    \label{fig:phases}
  \end{figure}
  
  \begin{figure}
    \includegraphics[width=\textwidth]{}
    \caption{\textbf{Bifurcation diagrams. } Qualitative bifurcation diagrams are shown for fixed
$a=1.31$ with: \textbf{a} varying $k$ and fixed $s=0.5$ and \textbf{b} varying $s$ and arbitrary $k$
close to $0.8$. Beginning with low $k$, when varying this parameter while holding $s$ fixed we
depart from a situation in which $P_x$ and $P_y$ are stable and only one extra equilibrium point
exists in the interior of $A$, a saddle point. Above a certain value of $k$ a new stable fixed point
$P^*$ comes into existence. For varying $s$ and fixed $k$, if $k$ is large enough--as it is the case
in this example--they exist two values of $s$ between which the third stable equilibrium $P^*$ point
exists. The trajectory of $P^*$ through the $x-y$ space is qualitative in both cases. }
    \label{fig:bifurcation}
  \end{figure}


\begin{thebibliography}{99} 
  
   
    \bibitem{MaynardSmith-Szathmary}
      J. Maynard-Smith, E. Szathmáry, 
      The Major Transitions in Evolution, 
      Oxford University Press, New York, 1997. 
   
    \bibitem{Wray}
      A. Wray, 
      The Transition to Language, 
      Oxford University Press, New York, 2002. 
  
    \bibitem{Hawkings-GellMann}
      J.A. Hawkings, M. Gell-Mann, 
      The Evolution of Human Languages, 
      Addison-Wesley, Reading, Mass., 1992. 
   
    \bibitem{Loreto-Steels}
      V. Loreto, L. Steels, 
      Emergence of Language, 
      Nature Phys. 3 (2007) 758. 
   
    \bibitem{Dyen-Black}
      I. Dyen, J.B. Kruskal, P. Black, 
      An Indoeuropean classiﬁcation: a lexicostatistical experiment, 
      Trans. Am. Phil. Soc. 82 (1992) 3-132. 
   
    \bibitem{Axelrod}
      R. Axelrod, 
      The dissemination of culture: A model with local convergence and global polarization, 
      J. Conﬂict Resolut. 41 (1997) 203–226. 
   
    \bibitem{Petroni-Serva}
      F. Petroni, M. Serva,
      Language distance and tree reconstruction, 
      J. Stat. Mech. (2008) P08012. 
   
    \bibitem{Schulze-Wichmann}
      C. Schulze, D. Stauffer, S. Wichmann, 
      Birth, Survival and Death of Languages by Monte Carlo Simulation, 
      Commun. Comput. Phys. 3 (2008) 271-294. 

    \bibitem{CorominasMurtra-Sole}
      B. Corominas-Murtra, S. Valverde, R.V. Sol\'e, 
      The ontogeny of scale-free syntax networks: phase transitions in early language acquisition, 
      Adv. Complex Syst. 12 (2009) 371-392. 

    \bibitem{Sole-Steels}
      R.V. Sol\'e, B. Corominas-Murtra, S. Valverde, L. Steels, 
      Language networks: their structure, function and evolution, 
      Complexity 15(6) (2010) 20-26. 
   
    \bibitem{NelsonSathi-Dagan}
      S. Nelson-Sathi, J.-M. List, H. Geisler, R.D. Gray, W. Martin, T. Dagan, 
      Networks uncover hidden lexical borrowing in Indo-European language evolution, 
      Proc. R. Soc. B 278 (2011) 1794–1803.    
   
    \bibitem{Amancio-Costa}
      D.R. Amancio, O.N. Oliveira Jr., L.F. Costa, 
      Using complex networks to quantify consistency in the use of words, 
      J. Stat. Mech. (2012) P01004. 

    \bibitem{Abrams-Strogatz}
      D.M. Abrams, S.H. Strogatz, 
      Modelling the dynamics of language death, 
      Nature 424 (2003) 900. 
   
    \bibitem{Sole-Fortuny}
      R.V. Sol\'e, B. Corominas-Murtra, J. Fortuny, 
      Diversity, competition, extinction: the ecophysics of language change, 
      J. R. Soc. Interface 7 (2010) 1647–1664. 

    \bibitem{Castellano-Loreto}
      C. Castellano, S. Fortunato, V. Loreto, 
      Statistical physics of social dynamics, 
      Rev. Mod. Phys. 81 (2009) 591-646. 

    \bibitem{Stauffer-SanMiguel}
      D. Stauffer, X. Castell\'o, V.M. Egu\'iluz, M. San Miguel, 
      Microscopic Abrams–Strogatz model of language competition, 
      Physica A 374(2) (2007) 835–842. 
   
    \bibitem{Chapel-SanMiguel}
      L. Chapel, X. Castell\'o, C. Bernard, G. Deffuant, V.M. Egu\'iluz, S. Martin, M. San Miguel, 
      Viability and resilience of languages in competition, 
      PLoS ONE 5 (2010) e8681. 
   
    \bibitem{Patriarca-Leppanen}
      M. Patriarca, T. Lepp\"anen, 
      Modeling language competition, 
      Physica A 338(1–2) (2004) 296–299. 
   
    \bibitem{Patriarca-Heinsalu}
      M. Patriarca, E. Heinsalu, 
      Influence of geography on language competition, 
      Physica A 388(2–3) (2009) 174–186. 
	
    \bibitem{Vazquez-SanMiguel}
      F. V\'azquez, X. Castell\'o, M. San Miguel, 
      Agent based models of language competition: Macroscopic descriptions and order-disorder transitions, 
      J. Stat. Mech. (2010) P04007. 

    \bibitem{Wang-Minett}
      W.S.-Y. Wang, J.W. Minett, 
      The invasion of language: emergence, change and death, 
      Trends Ecol. Evol. 20 (2005) 263-269. 	
   
    \bibitem{Castello-SanMiguel}
      X. Castell\'o, V.M. Egu\'iluz, M. San Miguel,
      Ordering dynamics with two non-excluding options: bilingualism in language competition, 
      New J. Phys. 8 (2006) 308. 
   
    \bibitem{Mira-Paredes}
      J. Mira, A. Paredes, 
      Interlinguistic similarity and language death dynamics, 
      Europhys. Lett. 69 (2005) 1031-1034. 
   
    \bibitem{Minett-Wang}
      J.W. Minett, W.S.-Y. Wang,
      Modeling endangered languages: The effects of bilingualism and social structure, 
      Lingua 118 (2008) 19-45. 

    \bibitem{Mira-Nieto} 
      J. Mira, L.F. Seoane, J.J. Nieto, 
      The importance of interlinguistic similarity and stable bilingualism when two languages compete, 
      New J. of Phys. 13 (2011) 033007. 
	
    \bibitem{Patriarca-SanMiguel}
      M. Patriarca, X. Castell\'o, J.R. Uriarte, V.M. Egu\'iluz, M. San Miguel, 
      Modeling two-language competition dynamics, 
      Adv. Complex Syst. 15(3-4) (2012) 1250048. 
   
    \bibitem{Nowak-Niyogi}
      M.A. Nowak, N.L. Komarova, P. Niyogi, 
      Computational and evolutionary aspects of language, 
      Nature 417 (2002) 611-617. 
   
    \bibitem{Maquez}
      Online supplementary material: http://www.youtube.com/watch?v=0IwkMjW\_C8Q 
      and http://www.usc.es/gl/departamentos/anmat/Green.html (this resource requires the Wolfram CDF player). 



    \bibitem{Gillespie1976}
      D.T. Gillespie, 
      A General Method for Numerically Simulating the Stochastic Time Evolution of Coupled Chemical Reactions, 
      J. Comput. Phys 2 (1976) 403-434. 

    \bibitem{Gillespie2007}
      D.T. Gillespie, 
      Stochastic Simulation of Chemical Kinetics, 
      Annu. Rev. Phys. Chem. 58 (2007) 35-55. 

    \bibitem{Nieto-RodriguezLopez}
      J.J. Nieto, R. Rodr\'iguez-L\'opez, 
      Analysis of a logistic differential model with uncertainty, 
      I. J. Dynamical Systems and Differential Equations 1 (2008) 164-176. 
   
    \bibitem{Nisbet-Gurney}
      R. Nisbet, W. Gurney, 
      A simple mechanism for population cycles, 
      Nature 263 (1976) 319-320. 

    \bibitem{McKane-Newman}
      A.J. McKane, T.J. Newman, 
      Predator-prey cycles from resonant amplification of demographic stochasticity, 
      Phys. Rev. Lett. 94 (2005) 218102. 
   
    \bibitem{Hidalgo-Munoz}
      J. Hidalgo, L.F. Seoane, J.M. Cort\'es, M.A. Mu\~noz, 
      Stochastic Ampliﬁcation of Fluctuations in Cortical Up-States, 
      PLoS ONE 7(8) (2012) e40710. 
   
    \bibitem{Alonso-Pascual}
      D. Alonso, A.J. McKane, M. Pascual, 
      Stochastic amplifications in epidemics. 
      J. R. Soc. Interface 4 (2007) 575-578. 
   
    \bibitem{Wallace-Cowan}
      E. Wallace, M. Benayoun, W. van Drongelen, J. Cowan, 
      Emergent Oscillations in Networks of Stochastic Spiking Neurons. 
      PLoS ONE 6 (2011) e14804. 
	 
	 
  \end{thebibliography}
\end{document}